\declaretheorem[style=plain,name=Definition,qed=$\blacksquare$]{Definition}
\declaretheorem[style=plain,name=Lemma]{lemma}
\declaretheorem[style=plain,name=Remark,qed=$\blacksquare$]{remark}
\declaretheorem[style=plain,name=Example,qed=$\blacksquare$]{example}
\declaretheorem[style=plain,name=Proposition,qed=$\blacksquare$]{proposition}
\declaretheorem[style=plain,name=Corollary,qed=$\blacksquare$]{corollary}
\declaretheorem[style=plain,qed=$\blacksquare$]{theorem}
\def\mc{\ensuremath\mathcal}
\begin{document}


\title{On Coded Caching with Heterogeneous Distortion Requirements} 
\author{Abdelrahman M. Ibrahim}
\author{Ahmed A. Zewail}
\author{Aylin Yener}
\affil{\normalsize Wireless Communications and Networking Laboratory (WCAN)\\
School of Electrical Engineering and Computer Science\\
The Pennsylvania State University, University Park, PA 16802.\\
\em \{ami137,zewail\}@psu.edu \qquad yener@engr.psu.edu}

\maketitle

\begin{abstract}
This paper considers heterogeneous coded caching where the users have unequal distortion requirements. The server is connected to the users via an error-free multicast link and designs the users' cache sizes subject to a total memory budget. In particular, in the placement phase, the server jointly designs the users' cache sizes and the cache contents. To serve the users' requests, in the delivery phase, the server transmits signals that satisfy the users' distortion requirements. An optimization problem with the objective of minimizing the worst-case delivery load subject to the total cache memory budget and users' distortion requirements is formulated. The optimal solution for uncoded placement and linear delivery is characterized explicitly and is shown to exhibit a threshold policy with respect to the total cache memory budget. As a byproduct of the study, a caching scheme for systems with fixed cache sizes that outperforms the state-of-art is presented.  

\end{abstract}


\section{Introduction}
Wireless data traffic is increasing at an unprecedented rate due to the demand on video streaming services, which have accounted for $60$ percent of total mobile data traffic in $2016$ \cite{cisco}. Efficient utilization of network resources is essential in order to accommodate the growth in data traffic. Caching utilizes the cache memories in the network nodes to shift some of the data traffic to off-peak hours. Reference \cite{maddah2014fundamental} has proposed \textit{coded caching}, where the end-users' cache contents are designed in a way to enable subsequently serving the users using multicast transmissions, reducing the delivery load on the server.


The significant gain achieved by coded caching has motivated studying its fundamental limits in various setups \cite{maddah2014fundamental,maddah2016coding,wan2016optimality,
yu2016exact,yu2017characterizing_ISIT,karamchandani2016hierarchical,ji2015comb,zewail2017combination,
wan2017combination,wan2017caching,ji2016fundamental,zewail2016fundamental,ibrahim2018device,
ravindrakumar2016fundamental,sengupta2015fundamental,zewail2016coded,
wang2015fundamental,amiri2016decentralized,sengupta2016layered,ibrahim2017centralized,
ibrahim2018coded,bidokhti2017Gaussian,bidokhti2017benefits2,
bidokhti2017benefits,ibrahim2017optimization,hassanzadeh2015distortion,
yang2016coded}. For instance, references \cite{karamchandani2016hierarchical,ji2015comb,zewail2017combination,
wan2017combination,wan2017caching} have studied coded caching in multi-hop networks. Coded caching for device-to-device (D2D) networks has been studied in \cite{ji2016fundamental,zewail2016fundamental,ibrahim2018device}. Caching with security requirements have been studied in \cite{ravindrakumar2016fundamental,sengupta2015fundamental,zewail2016coded,zewail2017combination}.

Content delivery networks consists of heterogeneous end-devices that have varying storage, computational capabilities, and viewing preferences. In turn, the effect of heterogeneity in cache sizes on the delivery load has been studied in \cite{wang2015fundamental,amiri2016decentralized,sengupta2016layered,
ibrahim2017centralized,ibrahim2018coded,ibrahim2018device}. Additionally, optimizing the users' cache sizes subject to a network-wide total memory budget has been considered in \cite{ibrahim2017optimization,bidokhti2017Gaussian,
bidokhti2017benefits2,bidokhti2017benefits}.


The heterogeneity in users' preferences for content motivates developing coded caching schemes with different quality-of-service requirements per user. In this realm, coded caching schemes with heterogeneous distortion requirements have been studied in references \cite{hassanzadeh2015distortion,yang2016coded,timo2017rate}. In particular, reference \cite{hassanzadeh2015distortion} has considered a centralized system where files are modeled as independent and identically distributed (i.i.d.) samples from a Gaussian source. Each file is represented by a number of layers equal to the number of users in the system, and accessing the first $k$ layers guarantees that the $k$th user's distortion requirement is satisfied. In this setup, reference \cite{hassanzadeh2015distortion} has minimized the squared error distortion for given delivery load, cache sizes, and popularity distribution. Reference \cite{yang2016coded} has studied the problem of minimizing the delivery load in a centralized caching system with heterogeneous distortion requirements and heterogeneous cache sizes at the users. In particular, reference \cite{yang2016coded} has considered a separation approach where the memory allocation over the layers and the caching scheme are optimized separately. 


In this work, like reference \cite{yang2016coded}, we study the problem of minimizing the delivery load given heterogeneous distortion requirements at the users. Different from references \cite{hassanzadeh2015distortion} and \cite{yang2016coded}, we consider that the server not only designs the users' cache contents, but also optimizes their cache sizes subject to a total cache memory budget. In contrast to \cite{yang2016coded}, we jointly design the users cache sizes, the memory allocation over the layers, and the caching scheme in order to minimize the delivery load that achieves certain distortion requirements at the users. Under uncoded placement and linear delivery, we show that the joint optimization problem reduces to a memory allocation problem over the layers which can be solved analytically. In particular, the memory allocation over the layers is obtained using a threshold policy, which depends on the available total cache memory budget and the target distortion requirements at the users. We extend the cut-set bound in \cite{yang2016coded} to systems with total cache memory budget and compare it with the delivery load achieved by the proposed scheme. We observe that the cut-set bound is achievable for large total cache memory budgets.

Although our primary goal in this study is to demonstrate the merit of optimally allocating the cache sizes at different users, we note that the new caching scheme we propose improves on the caching schemes in \cite{yang2016coded} for systems with fixed cache sizes, since we jointly optimize the caching scheme and the memory allocation over the layers. More specifically, the flexibility in our scheme allows us to exploit the multicast opportunities over all layers. Our numerical results confirm the gain attained by our proposed scheme over the two caching schemes presented in \cite{yang2016coded}. Finally, we present a numerical example that shows the suboptimality of exploiting only the intra-layer multicast opportunities without taking into account the inter-layer multicast opportunities.

\textit{Notation:} Vectors are represented by boldface letters, $ \oplus$ refers to bitwise XOR operation, $|W|$ denotes size of $W$, $\mc A \setminus \mc B $ denotes the set of elements in $\mc A$ and not in $\mc B $, $[K] \triangleq \{1,\dots,K\}$, $ \phi$ denotes the empty set, $\subsetneq_{\phi} [K]$ denotes non-empty subsets of $[K]$.

\begin{figure}[t]
	\includegraphics[scale=1.2]{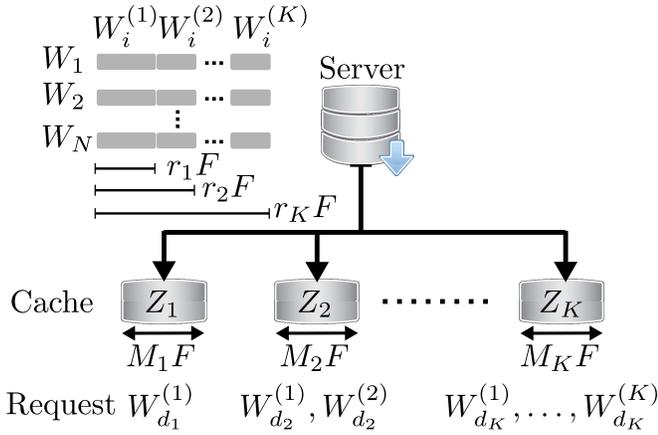}
	\centering
	\caption{Caching system with heterogeneous distortion requirements $D_1 \geq D_2 \geq \dots \geq D_K $.}\label{fig:sys_model}
	\vspace{-.2 in}
\end{figure}


\section{System Model}\label{sec_sysmod}
\vspace{+.09 in}
We consider a server connected to $K$ users via an error-free multicast network, see Fig. \ref{fig:sys_model}. The server has a library of $N$ files, $ W_{1}, \dots, W_{N} $, each file $W_j$ consists of $F$ i.i.d. samples $ [S_{j,1},\dots,S_{j,F}]$ from a uniform discrete source. That is, $S_{j,i}$ is uniformly distributed over $\mathbb{F}_q$, $Pr(S_{j,i}=s)=1/q$ for $s \in \mathbb{F}_q$. We use Hamming distortion as our distortion measure, i.e., we have 
\begin{align}
d(s,\hat{s})= \begin{cases}0, \text{ if } s=\hat{s},\\
1, \text{ if } s\neq\hat{s}.
\end{cases} 
\end{align}
The rate-distortion function for a uniform discrete source with Hamming distortion is given as \cite{cover2006elements} 
\begin{align}
\rho(D)=\log(q)-H(D)-D\log(q-1),
\end{align}
where $ 0 \leq D \leq 1-1/q$. Since a uniform discrete source with Hamming distortion is successively refinable \cite{equitz1991successive}, each file can be represented using a scalable layered description that achieves the target rate $r_l \triangleq \rho(D_l)$ at layer $l$ for distortion requirements $D_1 \geq D_2 \geq \dots \geq D_K $. In particular, layer $1$ of each file is a coarse description with size $r_1 F$ bits, while layer $l \in \{2,\dots,K\}$ is a refinement with size $(r_l-r_{l-1})F $ bits. In turn, we represent file $W_{j}$ by $W_{j}^{(1)},\dots,W_{j}^{(K)} $, where $W_{j}^{(l)} $ denotes layer $l$ of file $j$. Additionally, we have that $D_k$ is the distortion requirement at user $k$, i.e., user $k$ needs layers $\{1,\dots,k\}$ in order to decode the requested file. Consequently, we define the target rate vector $\bm r=[r_1,..,r_K]$.

The sizes of the users' cache memories are determined by the server. In particular, the server allocates $M_k F$ bits to user $k$ such that $\sum_{k=1}^{K} M_k F \leq  m_{\text{tot}} N F$ bits, where $m_{\text{tot}} $ is the cache memory budget normalized by library size $NF$. We also define $m_k=M_k/N$, to denote the memory size at user $k$ normalized by the library size $N F$. We consider the regime where the number of files is greater than or equal the number of users, i.e., $N \geq K $, and $M_k \in [0,N r_k], \ \forall k \in [K]$ which implies $m_k \in [0,r_k], \ \forall k \in [K]$. We denote the memory size vector by $\bm M =[M_1,\dots,M_K]$ and its normalized version by $\bm m =[m_1,\dots,m_K]$. 

The system has two operational phases: placement and delivery  \cite{ibrahim2017optimization}. In the placement phase, the server assigns the users' cache sizes and the contents of the users' cache memories subject to the distortion requirements $\bm D=[D_1,\dots,D_K]$ and the cache memory budget $m_{\text{tot}}$, without the knowledge of the users' demands. In particular, the server places a subset of the library, $Z_k$, at the cache memory of user $k$, such that $|Z_k| \leq M_k F$ bits. In the delivery phase, user $k$ requests the file $W_{d_k}$ which must be recovered with an average distortion less than or equal to $D_k$. Equivalently, user $k$ requests layers $[k]$ from file $W_{d_k}$. The requested files are represented by the demand vector $\bm d=[d_1, \dots, d_K]$. We assume that $\bm d$ consists of identical and independent uniform random variables over the files \cite{maddah2014fundamental}. In order to guarantee the desired distortion requirements at the users, the server needs to deliver the bits that are not cached by the users in each requested layer. In particular, the server sends the signals $X_{\mc T, \bm d}$ to the users in the sets $ \mc T \subsetneq_{\phi} [K]$. User $k$ should be able to decode $W^{(1)}_{d_k},\dots,W^{(k)}_{d_k}$ by utilizing the cached contents $Z_k$ and the transmitted signals $X_{\mc T, \bm d}, \mc T \subsetneq_{\phi} [K]$. Formally, we have the following definitions.
\begin{Definition} For a given file size $F$, a caching scheme is defined by the collection of cache placement, encoding, and decoding functions $(\varphi_k(.),\psi_{\mc T, \bm d}(.),\mu_{\bm d, k}(.))$. A cache placement function 
\begin{align}
\varphi_k: \mathbb{F}_q^F \times \dots \times \mathbb{F}_q^F \rightarrow [2^{M_kF}],
\end{align}
maps the $N$ files to the cache of user $k$, i.e., $ Z_k = \varphi_k(W_1, W_2,..,W_N) $. Given a demand vector $\bm d$, an encoding function 
\begin{align}
\psi_{\mc T, \bm d}: \mathbb{F}_q^F \times \dots \times \mathbb{F}_q^F \rightarrow [2^{v_{\mc T}F}],
\end{align}
maps the requested files to a signal with length $v_{\mc T} F$ bits, which is sent to the users in $\mc T$, i.e., $X_{\mc T,\bm d}= \psi_{\mc T, \bm d}( W_{d_{1}},..,W_{d_{K}})$. Finally, a decoding function 
\begin{align}
\mu_{\bm d, k}: [2^{M_kF}] \times [2^{RF}] \rightarrow \mathbb{F}_q^F, 
\end{align}
with  $R \triangleq \sum_{\mc T \subsetneq_{\phi} [K]} v_{\mc T}$, maps $Z_k$ and $X_{\mc T, \bm d},$ $\mc T \subsetneq_{\phi} [K]$ to $\hat W_{d_k}$, i.e., $\hat W_{d_k} = \mu_{\bm d, k}\left(X_{\{1\}, \bm d},X_{\{2\}, \bm d},\dots,X_{[K], \bm d}, Z_k \right)$. 
\end{Definition}

\begin{Definition}For given normalized cache sizes $\bm m$, distortion requirements $D_1 \geq D_2 \geq \dots \geq D_K $, and $r_l =\rho(D_l) $, the delivery load $R(\bm m, \bm r)$ is achievable if there exists a sequence of caching schemes such that $\lim\limits_{F \rightarrow \infty} \dfrac{1}{F} \sum\limits_{i=1}^{F} d(S_{d_k,i},\hat S_{d_k,i}) \leq D_k$, $\forall k \in [K]$, $\forall d_k \in [N]$. Furthermore, the infimum over all achievable delivery loads is denoted by $R^{*}(\bm m, \bm r)$. 
\end{Definition}
In this work, we consider the class of cache placement schemes, $\mathfrak{A}$, in which user $k$ cache uncoded pieces of layers $[k]$ of the files, and the class of delivery schemes, $\mathfrak{D}$, where the multicast signals are formed using linear codes. Note that the uniform demands assumption implies that each user should cache the same number of bits from all files, i.e., user $k$ dedicates $m_k F$ bits to each file.
\begin{Definition} For given normalized cache sizes $\bm m$ and target rates $\bm r$, 
the worst-case delivery load under an uncoded placement scheme in $ \mathfrak A$, and a linear delivery policy in $\mathfrak{D}$, is defined as 
\begin{equation}
R_{\mathfrak A, \mathfrak D}(\bm m, \bm r) \triangleq \max_{\bm d \in [N]^{K}}  R_{\bm d,\mathfrak A, \mathfrak D}  =  \sum_{\mc T \subsetneq_{\phi} [K]} v_{ \mc T}.
\end{equation}
Furthermore, by taking the infimum over $ \mathfrak A$ and all possible delivery policies, we get $R^*_{\mathfrak A}(\bm m, \bm r)$.
\end{Definition}
The trade-off between the delivery load and the total cache memory budget is defined as follows.
\begin{Definition} For given normalized cache memory budget $m_{\text{tot}}$ and target rate vector $\bm r$, the infimum over all achievable worst-case delivery loads is given as
\begin{align}
R^{*}(m_{\text{tot}}, \bm r)= \inf_{\bm m \in \mc M(m_{\text{tot}}, \bm r)} R^{*}(\bm m, \bm r),
\end{align}
where $\mc M(m_{\text{tot}}, \bm r)=\Big\{\bm m \big\vert \ \! 0 \! \leq \! m_k \! \leq \! r_k, \ \! \sum\limits_{k=1}^{K} m_k = m_{\text{tot}} \Big\} $. Furthermore, we have
\begin{align}
R^{*}_{\mathfrak A}(m_{\text{tot}}, \bm r)= \inf_{\bm m \in \mc M(m_{\text{tot}}, \bm r)} R^{*}_{\mathfrak A}(\bm m, \bm r).
\end{align}
\end{Definition}
\section{A Novel Caching Scheme}\label{sec_cach}
Given the layered description of the files explained in Section \ref{sec_sysmod}, the problem of designing the users' cache contents, under the uncoded placement assumption, can be decomposed into $K$ placement problems, each of which corresponds to one of the layers \cite{yang2016coded}. In particular, the cache memory at user $k$ is partitioned over layers $[k]$ and $m_{k}^{(l)} $ denotes the normalized cache memory dedicated to layer $l$ at user $k$. In turn, the placement problem of layer $l$ is equivalent to the placement problem with $K-l+1$ users and unequal cache memories $\bm m^{(l)}=[m_l^{(l)},\dots,m_K^{(l)}]$ addressed in \cite{ibrahim2017optimization,ibrahim2018coded}. By contrast, in the delivery phase, decoupling over the layers is suboptimal. That is, we need to jointly design the multicast signals over all layers, in order to utilize all multicast opportunities. Next, we explain the cache placement and delivery schemes in detail.

%
%
\subsection{Placement Phase}
Under uncoded placement, the placement phase is decomposed into $K$ layers. In particular,  layer $l$ of each file is partitioned over users $\{l,\dots,K\}$. That is, $W^{(l)}_{j}$ is divided into subfiles, $\tilde W^{(l)}_{j,\mc S}, \mc S \subset \{l,\dots,K\}$, which are labeled by the set of users $\mc S $ exclusively storing them. We assume that $|\tilde W^{(l)}_{j,\mc S}|= a^{(l)}_{\mc S} F$ bits $\forall j \in [N]$, i.e., we have symmetric partitioning over the files and the \textit{allocation variable} $a^{(l)}_{\mc S} \in [0,r_l\!-\!r_{l-1}]$ defines the size of $\tilde W^{(l)}_{j,\mc S}$. In turn, the set of feasible placement schemes for layer $l$ is defined by 
\begin{align}\label{eqn_feas_alloc}
  \mathfrak A^{(l)}(\bm m^{(l)},\bm r) &=   \bigg\lbrace  \bm a^{(l)} \Big\vert   \sum\limits_{\mc S \subset \{l,\dots,K\} } \! \! \! \!   a^{(l)}_{\mc S }=r_l-r_{l-1}, \nonumber \\  &\sum\limits_{\mc S \subset \{l,\dots,K\}: \ \! k \in \mc S } \! \! \! \! \! \! \! \! \! \! \! a^{(l)}_{\mc S } \leq m_k^{(l)}, \forall k \in \{l,\dots,K\} \bigg\rbrace,\!\!
\end{align}
where $\bm a^{(l)} $ is the vector representation of $\{ a^{(l)}_{\mc S} \}_{ \mc S} $. Note that the first constraint follows from the fact that the $l$th layer of each file is partitioned over the sets $\mc S \subset \{l,\dots,K\}$, while the second constraint represents the cache size constraint at layer $l$ for user $k$. Therefore, the cache content placed at user $k$ is given by
\begin{align}
Z_k = \bigcup\limits_{l \in [k]} \ \bigcup\limits_{j \in [N]} \ \bigcup\limits_{\mc S \subset \{l,\dots,K\}: \ \! k \in \mc S } \tilde W^{(l)}_{j,\mc S}.
\end{align}

\subsection{Delivery Phase}
In order to deliver the missing subfiles, the server sends the sequence of unicast/multicast signals $X_{\mc T, \bm d},$ $ \mc T \subset [K]$. In particular, the multicast signal to the users in $\mc T$ is defined by  
\begin{align}\label{eqn_mult_def1}
X_{\mc T, \bm d}= \oplus_{j \in \mc T} \ \bigcup_{l=1}^{L} W^{(l),\mc T}_{d_j}, 
\end{align}
where $L \triangleq \min_{i \in \mc T} i$ and $W^{(l),\mc T}_{d_j}$ denotes a subset of $W^{(l)}_{d_j} $ which is delivered to user $j$ via $X_{\mc T, \bm d}$. $W^{(l),\mc T}_{d_j}$ is constructed using the side-information available at the users in $\mc T \setminus \{j\} $. That is, if $W_{d_j,\mc S}^{(l),\mc T} $ denotes the subset of $W^{(l),\mc T}_{d_j}$ cached by the users in $\mc S$, then we have
\begin{align}\label{eqn_mult_def2}
W^{(l),\mc T}_{d_j} = \bigcup_{\mc S \subset \{l,\dots,K\}\setminus\{j\}: \ \! \mc T \setminus \{j\} \subset \mc S} W_{d_j,\mc S}^{(l),\mc T}.
\end{align}
Additionally, we denote $ |X_{\mc T, \bm d}|=\sum_{l=1}^{L} |W^{(l),\mc T}_{d_j}|= v_{\mc T} F$ bits $\forall j \in \mc T$ and $|W_{d_j,\mc S}^{(l),\mc T}|= u^{(l),\mc T}_{\mc S} F $ bits. That is, the transmission variable $v_{\mc T} \in [0,r_L]$ and the assignment variable $u^{(l),\mc T}_{\mc S} \in [0,a^{(l)}_{\mc S}]$ determine the structure of the signal $X_{\mc T, \bm d} $. Furthermore, the unicast signal $X_{\{k\}, \bm d} $ delivers the missing pieces of $ \bigcup_{l=1}^{k} W^{(l)}_{d_k}$ to user $k$, i.e., the pieces that had not been delivered by multicast signals and are not cached by user $k$. We assume $|X_{\{k\}, \bm d}|= \sum_{l=1}^{k} v_{\{k\}}^{(l)} F$ bits.

Next, we explain that all linear delivery schemes under uncoded placement can be described by the following linear constraints on the transmission and assignment variables. In particular, for given allocation $\{ \bm a^{(l)} \}_l$, the set of linear delivery schemes, $\mathfrak D(\bm a^{(1)},\dots,\bm a^{(K)},\bm r) $, is defined by
\begin{align}
&\! \! v_{\mc T} \! = \! \sum_{l=1}^{L} \sum_{\mc S \in \mc B^{(l),\mc T}_{j}} \! \!  \! \! u^{(l),\mc T}_{\mc S},  \forall \mc T \! \subset \! [K] \text{ s.t. } |\mc T| \geq 2, \forall j \! \in \! \mc T, \label{eqn_dlv_cnst1} \\
& \! \! \! \!  v_{\{k\}}^{(l)} +\sum\limits_{\mc T \subset \{l,\dots,K\} :  k \in \mc T, |\mc T|\geq 2} \sum_{\mc S \in \mc B^{(l),\mc T}_{k}} \! \!  \! \! u^{(l),\mc T}_{\mc S} + \! \! \! \! \! \! \! \! \sum_{\mc S \subset \{l,\dots,K\} :  k \in \mc S}  \! \! \! \! \! \! \! \! \! \!  a_{\mc S}^{(l)} \! \geq   \nonumber \\ & \qquad \qquad \qquad \quad  \! r_l \! - \! r_{l-1}, \forall l \in [K], \ \forall k \! \in \! \{l,\dots,K\}, \label{eqn_dlv_cnst2}\\
&\! \! \! \! \sum\limits_{\mc T \subset \{l,\dots,K\} : \ \! j \in \mc T, \mc T \cap \mc S \neq \phi} \! \! \! \! \! \! \! \! \! \! \! \! \! \! \! \! \! \!\!\!\! \! \! u^{(l),\mc T }_{\mc S} \leq a_{\mc S}^{(l)}, \forall l \! \in \! [K], \forall j \not\in \mc S, \forall \mc S \in \mc A^{(l)}, \label{eqn_dlv_cnst3}\\
&\! \! \! 0 \! \leq \! u_{\mc S}^{(l),\mc T} \! \! \leq \! a_{\mc S}^{(l)},\forall l \! \in \! [K], \forall \mc T \! \subsetneq_{\phi} \! \{l,\dots,K\}, \forall \mc S \! \in \! \mc B^{(l),\mc T} \! \!, \label{eqn_dlv_cnst4}
\end{align}
where 
\begin{align*}
\mc B^{(l),\mc T}_{j} &\triangleq  \big\lbrace \mc S \subset \{l,\dots,K\}\setminus\{j\}: \ \! \mc T \setminus \{j\} \subset \mc S \big\rbrace,\\
\mc A^{(l)} &\triangleq \big\lbrace \mc S \subset \{l,\dots,K\} : 2 \leq |\mc S| \leq K-l \big\rbrace,
\end{align*}
and $ \mc B^{(l),\mc T} \triangleq \bigcup_{j \in \mc T} \mc B^{(l),\mc T}_{j} $.


The structural constraints in (\ref{eqn_dlv_cnst1}) follows from the structure of the multicast signals in (\ref{eqn_mult_def1}) and (\ref{eqn_mult_def2}). The delivery completion constraints in (\ref{eqn_dlv_cnst2}) guarantee that the unicast and multicast signals complete the $l$th layer of the requested files, and the redundancy constraints in (\ref{eqn_dlv_cnst3}) prevent the transmission of redundant bits to the users. For example, for $K=3$, the structural constraints are defined as
\begin{subequations}
\begin{align}
v_{\{1,2\}} \! &= \! u^{(1),\{1,2\}}_{\{2\}} \! + \! u^{(1),\{1,2\}}_{\{2,3\}} \! = \! u^{(1),\{1,2\}}_{\{1\}} \! + \! u^{(1),\{1,2\}}_{\{1,3\}}, \\
v_{\{1,3\}}\! &= \! u^{(1),\{1,3\}}_{\{3\}} \! + \! u^{(1),\{1,3\}}_{\{2,3\}} \! = \! u^{(1),\{1,3\}}_{\{1\}} \! + \! u^{(1),\{1,3\}}_{\{1,2\}}, \\
v_{\{2,3\}} \! &= \! u^{(1),\{2,3\}}_{\{3\}} \! + u^{(1),\{2,3\}}_{\{1,3\}} \! + u^{(2),\{2,3\}}_{\{3\}}, \\
&= u^{(1),\{2,3\}}_{\{2\}} \! + \! u^{(1),\{2,3\}}_{\{1,2\}} \! + \! u^{(2),\{2,3\}}_{\{2\}}, \\
\! \! \! v_{\{1,2,3\}} \! &= \! u^{(1),\{1,2,3\}}_{\{2,3\}} \! = \! u^{(1),\{1,2,3\}}_{\{1,3\}} \! = \! u^{(1),\{1,2,3\}}_{\{1,2\}},
\end{align}
\end{subequations} 
the delivery completion constraints for user $3$ are defined as
\begin{subequations}
\begin{align}
&v_{\{3\}}^{(1)} \! + \! \big( u^{(1),\{1,3\}}_{\{1\}} \! + \! u^{(1),\{1,3\}}_{\{1,2\}}\big) \! + \! \big( u^{(1),\{2,3\}}_{\{2\}} \! + \! u^{(1),\{2,3\}}_{\{1,2\}} \big) \! + \nonumber \\ &u^{(1),\{1,2,3\}}_{\{1,2\}} \! + \! \big( a_{\{3\}}^{(1)} \! + \! a_{\{1,3\}}^{(1)} \! + a_{\{2,3\}}^{(1)} \! + a_{\{1,2,3\}}^{(1)} \big) \! \geq \! r_1, \\
&v_{\{3\}}^{(2)}  +  u^{(2),\{2,3\}}_{\{2\}}  +  a_{\{3\}}^{(2)}  +  a_{\{2,3\}}^{(2)}  \geq  r_2 \! - r_1, \\
&v_{\{3\}}^{(3)} + a_{\{3\}}^{(3)} \geq r_3 \! - \! r_2,
\end{align}
\end{subequations}
and the redundancy constraints are defined as
\begin{subequations}
\begin{align}
u^{(1),\{1,3\}}_{\{1,2\}}  +  u^{(1),\{2,3\}}_{\{1,2\}}  +  u^{(1),\{1,2,3\}}_{\{1,2\}}  &\leq   a^{(1)}_{\{1,2\}}, \\
u^{(1),\{1,2\}}_{\{1,3\}}+u^{(1),\{2,3\}}_{\{1,3\}}+u^{(1),\{1,2,3\}}_{\{1,3\}} &\leq a^{(1)}_{\{1,3\}}, \\
u^{(1),\{1,2\}}_{\{2,3\}}+u^{(1),\{1,3\}}_{\{2,3\}}+u^{(1),\{1,2,3\}}_{\{2,3\}} &\leq a^{(1)}_{\{2,3\}}. 
\end{align}
\end{subequations}
We denote the vector representation of the transmission variables $ \{ v_{\mc T} \}_{ \mc T}, \{v_{\{k\}}^{(l)}\}_{k,l} $ and the assignment variables $\{ u^{(l),\mc T}_{\mc S} \}_{l,\mc T, \mc S}$ by $\bm v$ and $\bm u $, respectively.
\section{Formulation}
\vspace{-.05in}
In this section, we demonstrate that the problem of minimizing the worst-case delivery load by optimizing over the users' cache sizes, uncoded placement, and linear delivery, can be formulated as a linear program. In particular, given the target rate vector $\bm r$, the total memory budget $m_{\text{tot}}$, and $N \geq K$, the minimum worst-case delivery load under uncoded placement and linear delivery, $R^*_{\mathfrak{A},\mathfrak{D}}(m_{\text{tot}},\bm r) $, is characterized by 
\begin{subequations} \label{eqn_opt_budget}
\begin{align}
\textit{\textbf{O1}:}  \qquad   &  \min_{\bm a^{(l)},\bm u ,\bm v,\bm m^{(l)}}  
& & \sum_{\mc T \subsetneq_{\phi} [K] } \! \! \! \! v_{\mc T} \\
& \text{subject to}
& & \bm a^{(l)} \in \mathfrak{A}^{(l)}(\bm m^{(l)},\bm r), \forall l \in [K]\\
& & & \! \! \! (\bm u, \bm v) \in \mathfrak D(\bm a^{(1)},\dots,\bm a^{(K)},\bm r),\\
& & & \sum_{k=1}^{K}\sum_{l=1}^{k} m_k^{(l)} = m_{\text{tot}}, 
\end{align}
\end{subequations}
where $\mathfrak{A}^{(l)}(\bm m^{(l)},\bm r) $ is the set of uncoded placement schemes in layer $l$ defined in (\ref{eqn_feas_alloc}) and $\mathfrak D(\bm a^{(1)},\dots,\bm a^{(K)},\bm r) $ is the set of feasible linear delivery schemes defined by (\ref{eqn_dlv_cnst1})-(\ref{eqn_dlv_cnst4}).

We can also solve the problem of designing the caching scheme and the memory allocation over the layers for systems with fixed cache sizes as in \cite{yang2016coded}. In particular, for fixed cache sizes $\bm m$, the minimum worst-case delivery load under uncoded placement and linear delivery, $R^*_{\mathfrak{A},\mathfrak{D}}(\bm m,\bm r) $, is characterized by
\begin{subequations} \label{eqn_opt_fixed_Mem}
\begin{align}
\textit{\textbf{O2}:}  \qquad   &  \min_{\bm a^{(l)},\bm u ,\bm v,\bm m^{(l)}}  
& & \sum_{\mc T \subsetneq_{\phi} [K] } \! \! \! \! v_{\mc T} \\
& \text{subject to}
& & \bm a^{(l)} \in \mathfrak{A}^{(l)}(\bm m^{(l)},\bm r), \forall l \in [K]\\
& & & \! \! \! (\bm u, \bm v) \in \mathfrak D(\bm a^{(1)},\dots,\bm a^{(K)},\bm r),\\
& & & \! \! \sum_{l=1}^{k} m_k^{(l)} = m_k, \forall k \in [K]. 
\end{align}
\end{subequations}
In contrast to the formulation in reference \cite{yang2016coded}, observe that in (\ref{eqn_opt_fixed_Mem}), we jointly design the caching scheme and the memory allocation over the layers, and exploit multicast opportunities over different layers. 
\section{Optimal Cache Allocation}
In this section, we first characterize the solution to the optimization problem in (\ref{eqn_opt_budget}), i.e., we find the achievable worst-case delivery load assuming uncoded placement and linear delivery. Then, we present a lower bound on the trade-off between the minimum worst-case delivery load under any caching scheme and the cache memory budget.

\begin{theorem}\label{thm_mtot}
Given the target rate vector $\bm r$, $N \geq K$, and the total memory budget $m_{\text{tot}}= \sum_{l=1}^{K} t_l f_l$, where $f_1 \triangleq r_1 $, $f_l \triangleq r_l-r_{l-1}$ for $l>1$, $t_l \in \{0,1,\dots,K-l+1\}$ and $ t_{l+1} \leq t_l \leq t_{l-1} $, the minimum achievable worst-case delivery load under uncoded placement is given by
\begin{align}\label{eqn_thm_mtot}
R^*_{\mathfrak{A},\mathfrak{D}}(m_{\text{tot}},\bm r)\!= \!R^*_{\mathfrak{A}}(\bm r,m_{\text{tot}}) \! = \! \sum_{l=1}^{K} \dfrac{(K-l+1)-t_l}{1+t_l}f_l. 
\end{align}
Furthermore, for general $m_{\text{tot}} \in [0,\sum_{k=1}^{K}r_k]$, $R^*_{\mathfrak{A},\mathfrak{D}}(\bm r,m_{\text{tot}})$ is defined by the lower convex envelope of these points.
\end{theorem}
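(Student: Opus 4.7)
The plan is to prove Theorem~\ref{thm_mtot} in three stages: an achievability construction at the corner points, a matching converse within the class $(\mathfrak{A},\mathfrak{D})$, and extension to general $m_{\text{tot}}$ by memory sharing.

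\textbf{Achievability at corner points.} Fix a budget of the form $m_{\text{tot}}=\sum_{l=1}^K t_l f_l$ with $t_{l+1}\le t_l\le K-l+1$ integers. Invoking the layer decomposition established in Section~\ref{sec_cach}, I would assign to layer $l$ a symmetric per-user cache $m_k^{(l)}=\frac{t_l f_l}{K-l+1}$ for every $k\in\{l,\dots,K\}$, and then apply the Maddah-Ali--Niesen scheme within each layer. That scheme, viewed as an instance with $K-l+1$ users and a ``file'' of size $f_l$ bits (per file in the library), delivers layer $l$ at load $\frac{(K-l+1)-t_l}{1+t_l}f_l$. Summing over $l$ yields the RHS of \eqref{eqn_thm_mtot}; the total memory used is $\sum_l (K-l+1)\frac{t_l f_l}{K-l+1}=m_{\text{tot}}$ by construction. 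The per-user budget constraint $m_k\le r_k$ is verified by a telescoping computation: $m_k=\sum_{l=1}^{k}\frac{t_l f_l}{K-l+1}\le \sum_{l=1}^k f_l = r_k$ since $t_l\le K-l+1$. This shows the value in \eqref{eqn_thm_mtot} is achievable by a scheme in $(\mathfrak{A},\mathfrak{D})$.

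\textbf{Converse inside $(\mathfrak{A},\mathfrak{D})$.} Because the placement problem decouples across layers (each layer of each file being an independent piece of data), the delivery load of any linear delivery scheme satisfies $R\ge \sum_l R^{(l)}$, where $R^{(l)}$ is the rate needed to complete layer $l$ given the layer-$l$ placement $\bm m^{(l)}$. For a fixed per-layer allocation $\bm m^{(l)}$ the subproblem is precisely the heterogeneous-cache problem with $K-l+1$ users and file size $f_l$, whose optimal worst-case load under uncoded placement and linear delivery is given by the characterization in \cite{ibrahim2017optimization,ibrahim2018coded}. Specializing that characterization and optimizing over the distribution of the per-layer budget $\sum_{k\ge l} m_k^{(l)}$ among the $K-l+1$ eligible users (which by the symmetry of the subproblem is minimized at the equal allocation) yields per-layer optimal value
\begin{equation*}
R^{(l)}_\star(\bar m^{(l)})=\text{lower convex env.\ of}\ \Bigl\{\tfrac{(K-l+1)-t_l}{1+t_l}f_l\Bigr\}_{t_l=0}^{K-l+1},
\end{equation*}
as a function of $\bar m^{(l)}:=\sum_{k\ge l}m_k^{(l)}=t_l f_l$. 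Problem O1 therefore reduces to the separable convex program $\min\sum_l R^{(l)}_\star(\bar m^{(l)})$ subject to $\sum_l \bar m^{(l)}=m_{\text{tot}}$ and $\bar m^{(l)}\in[0,(K-l+1)f_l]$.

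\textbf{Solving the reduced program and wrapping up.} The reduced problem is a waterfilling over layers. The marginal reduction in load when the integer parameter $t_l$ is incremented by one (at unit memory cost $f_l$) equals $\frac{K-l+2}{(1+t_l)(2+t_l)}$ per unit of memory, which is strictly decreasing in $t_l$ and, for fixed $t_l$, decreasing in $l$. A greedy allocation that pours memory into the layer with the largest current marginal benefit is therefore optimal for the piecewise-linear separable objective, and it produces exactly the monotone sequence $t_1\ge t_2\ge\cdots\ge t_K$ required in the theorem. The corner-point assertion in \eqref{eqn_thm_mtot} then follows. For general $m_{\text{tot}}\in[0,\sum_k r_k]$, the infimum $R^*_{\mathfrak A,\mathfrak D}(m_{\text{tot}},\bm r)$ is the lower convex envelope of these corner points because one can time/memory-share between two adjacent corner policies, and convexity of $R^*_{\mathfrak A,\mathfrak D}(\cdot,\bm r)$ (standard from the LP structure of O1) prohibits doing better.

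\textbf{Main obstacle.} The non-routine step is the converse for the layer-wise subproblem together with the justification that the equal split of $\bar m^{(l)}$ among the $K-l+1$ users in layer $l$ is without loss of optimality. I would reduce this to the heterogeneous-cache converse of \cite{ibrahim2017optimization,ibrahim2018coded} by noting that the users in layer $l$ play fully symmetric roles (they all need the same layer of the requested file and the budget is the only distinguishing feature), so any asymmetric allocation can be symmetrized by averaging over permutations without increasing the worst-case load. Verifying this symmetrization argument carefully, and ensuring the greedy waterfilling is optimal over the integer lattice rather than only the continuous relaxation, is where the bulk of the technical work lies.
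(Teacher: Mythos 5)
Your proposal is correct in substance and shares the paper's skeleton (decompose into layers, show equal per-layer cache allocation is optimal, reduce to a memory allocation over layers, achieve with Maddah-Ali--Niesen per layer, take the lower convex envelope for general $m_{\text{tot}}$), but the two load-bearing steps are argued differently. For the optimality of the symmetric split within a layer, the paper proves Lemma~\ref{lemma_equal_mem} by taking the dual of the averaged cut-set-type bound of \cite[Theorem 1]{ibrahim2018coded} with equal multipliers $\lambda_k=\lambda$, which yields a lower bound depending only on $\sum_k m_k$ and matching the MAN corner points; you instead invoke the heterogeneous-cache characterization of \cite{ibrahim2017optimization,ibrahim2018coded} plus a permutation-averaging/symmetrization argument. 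Your route is more self-contained conceptually but leans on the external characterization being tight, whereas the paper's dual bound gives the needed inequality directly. For the reduced per-layer allocation problem, the paper solves the LP in (\ref{eqn_opt_budget_simpl}) by duality, while you run an explicit greedy waterfilling; your marginal computation $\frac{K-l+2}{(1+t_l)(2+t_l)}$ per unit memory is correct, it is decreasing in both $t_l$ and $l$, and it does recover the monotone corner points $t_1\ge\cdots\ge t_K$ (e.g., it reproduces the ordering $r_1,r_2,r_3,r_1+r_3,r_2+r_3$ in (\ref{eqn_R_achv_K3})), so this is a legitimate and arguably more transparent alternative. Two caveats: first, your converse asserts $R\ge\sum_l R^{(l)}$ because the layers are independent data, but the delivery signals in $\mathfrak D$ deliberately mix layers (that is the whole point of the inter-layer multicasting in Section~\ref{sec_cach}), so additivity of the per-layer lower bounds needs the same cut-set/genie additivity argument the paper also leaves implicit; second, your converse is phrased only over linear delivery $\mathfrak D$, whereas the theorem also claims $R^*_{\mathfrak A}$ equals the same expression, which requires the per-layer lower bound to hold for arbitrary delivery under uncoded placement, as the paper's Lemma~\ref{lemma_equal_mem} does.
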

\begin{proof} In Appendix \ref{app_thm_mtot}, we show that the optimal solution to the optimization problem in (\ref{eqn_opt_budget}) achieves the lower convex envelope of the delivery load points in (\ref{eqn_thm_mtot}). In particular, for $m_{\text{tot}}= \sum_{i=1}^{K} t_i f_i$ where $t_l \in \{0,1,\dots,K-l+1\}$ and $ t_{l+1} \leq t_l \leq t_{l-1} $, the optimal memory allocation is defined as $m_k^{(l)}=t_l f_l/(K-l+1) $ for $k \in \{l,\dots,K\}$, i.e., the users are assigned equal cache sizes in each layer. In turn, we apply the MaddahAli-Niesen caching scheme in each layer, i.e., in layer $l$ the placement phase is defined by
\begin{align}
|\tilde W^{(l)}_{j,\mc S}|&= \begin{cases} f_l/\binom{K-l+1}{t_l} F, \text{ for } |\mc S| =t_l,\\
0, \text{ otherwise}.
\end{cases}
\end{align}
and the multicast signals are defined by $\oplus_{j \in \mc T} \tilde W^{(l)}_{d_j,\mc T \setminus \{j\}}$ for $\mc T \subset \{l,\dots,K\}$ and $|\mc T| =t_l+1$.

In general, any $m_{\text{tot}} \in [0, \sum_{i=1}^{K}r_i] $ can be represented as $ \sum_{i=1}^{K} t_i f_i$ where $t_i=x,$ for $ i=[y]$, $t_y=x-1+\alpha$, $t_i=x-1, $ for $  i=\{y+1,\dots,K-x+1\} $, $t_i=K-i+1, $ for $ i=\{K-x+2,\dots,K\} $, for some $x\in [K]$, $y \in [K-x+1]$ and $0<\alpha<1$. In particular, we have 
\begin{align}\label{eqn_def_thr_mtot1}
\! \! x \triangleq \begin{cases}1, \ 0 < m_{\text{tot}} \leq \sum\limits_{i=1}^{K}f_i, \\
\vdots \\
j, (j\!\!-\!\!1)\!\! \!\! \sum\limits_{i=1}^{K-j+1}\!\! \! \! f_i \!+\!\! \!\! \!\! \!\!\sum\limits_{i=K-j+2}^{K}\!\!\!\!\!\!\!(K\!\!-\!\!i\!\!+\!\!1)f_i \! <\! m_{\text{tot}} \! \leq j\!\! \sum\limits_{i=1}^{K-j}\!\! f_i \\ \vdots \ \ \ \ \ \ \ \ \ \ \ \ \ \ \ \ \ \  \ \ \ \ \ \ \ \  \ \    +\!\! \!\! \!\! \sum\limits_{i=K-j+1}^{K}\!\!\!\!\!(K\!\!-\!\!i\!\!+\!\!1)f_i, \\
K, \ \! (K\!\!-\!\!1)f_1 \!+\!\! \sum\limits_{i=2}^{K}(K\!\!-\!\!i\!\!+\!\!1)f_i \! <\! m_{\text{tot}} \! \leq \! \sum\limits_{i=1}^{K} \! r_i.
\end{cases}\!\!\!\!
\end{align}
and for a given $x$, we have $y=b$ if 
\begin{align}\label{eqn_def_thr_mtot2}
x\sum_{i=1}^{b-1}f_i&+(x\!-\!1)\!\!\!\!\sum\limits_{i=b}^{K-x+1}\!\!\!\!f_i\!+\!\! \!\!\!\!\!\sum\limits_{i=K-x+2}^{K}\!\!\!\!\!\!(K\!\!-\!\!i\!\!+\!\!1)f_i \!< \!m_{\text{tot}} \!\leq \!  \nonumber \\ &x\sum_{i=1}^{b}f_i+(x\!-\!1)\!\!\!\!\sum\limits_{i=b+1}^{K-x+1}\!\!\!\!f_i\!+\!\! \!\!\!\!\!\sum\limits_{i=K-x+2}^{K}\!\!\!\!\!\!(K\!\!-\!\!i\!\!+\!\!1)f_i.
\end{align}
In turn, any $m_{\text{tot}} \in [0,\sum_{k=1}^{K}r_k]$ can be represented as 
\begin{align}\label{eqn_def_thr_mtot3}
m_{\text{tot}}  &=  x \sum_{l=1}^{y-1}  f_l+(x- 1+\alpha)f_y +(x-1) \sum_{l=y+1}^{K-x+1}f_l + \nonumber \\ &  \ \ \  \sum_{l=K-x+2}^{K}  (K-l+1)f_l,
\end{align}
and the corresponding minimum worst-case delivery load under uncoded placement and linear delivery is given by
\begin{align}
\! \! &R^*_{\mathfrak{A},\mathfrak{D}}(m_{\text{tot}},\bm r) \! = \! \sum_{l=1}^{y-1} \dfrac{K\!-\!l\!-\!x\!+\!1}{x+1}f_l \!+ \! \! \! \sum_{l=y+1}^{K-x+1} \dfrac{K\!-\!y\!-\!x\!+\!2}{x}f_l+ \nonumber \\
&\qquad \quad \bigg(\dfrac{2(K\!-\!y)\!-\!x\!+\!3}{x+1}\!-\!\dfrac{(K\!-\!y\!+\!2)(x\!-\!1\!+\!\alpha)}{x(x\!+\!1)}\bigg)f_y. 
\end{align}
\end{proof}
For example, for $K=3$, the corner points of the delivery load cache budget trade-off under uncoded placement in (\ref{eqn_thm_mtot}) are defined as follows
%
\begin{align}\label{eqn_R_achv_K3}
\!\! \!\! R^*_{\mathfrak{A},\mathfrak{D}}(m_{\text{tot}},\bm r) \! = \! \begin{cases}r_1\!+\!r_2\!+\!r_3, \text{for } m_{\text{tot}}=0, \\
r_2\!+\!r_3\!-\!r_1, \text{for } m_{\text{tot}} \! = \!r_1, \\
r_1/2\!+\!r_3\!-\!r_2/2, \text{for } m_{\text{tot}} \! = \!r_2, \\
r_1/2\!+\!r_2/2, \text{for } m_{\text{tot}}\! =\! r_3, \\
r_2/2\!-\!r_1/6, \text{for } m_{\text{tot}}\! =\! r_1\!+\!r_3, \\
r_1/3, \text{for } m_{\text{tot}}\! =\! r_2\!+\!r_3, \\
0, \text{for } m_{\text{tot}}\! =\! r_1\!+\!r_2\!+\!r_3,
\end{cases}\!\!\!\!
\end{align}
which are illustrated in Fig. \ref{fig:R_vs_mtot_K3}, for $\bm r=[0.5, \ \! 0.7, \ \! 1]$.
%



Next, we extend the lower bound on the delivery load for systems with fixed cache sizes in \cite[Theorem 2]{yang2016coded} to systems with cache memory budget.

\begin{proposition}\label{prop_cutset_mtot} (Extension of \cite[Theorem 2]{yang2016coded})
Given $N \geq K$, the target rate vector $\bm r$, and the total memory budget $m_{\text{tot}} \in [0,\sum_{k=1}^{K}r_k]$, the infimum over all achievable delivery loads $R^*(m_{\text{tot}},\bm r)$ is lower bounded by
\begin{align}\label{eqn_cutset_mtot}
\! \! \min\limits_{\bm m \in \mc M(m_{\text{tot}}, \bm r)} \! \left\lbrace \! \max\limits_{\mc U \subset [K]} \! \left\lbrace \! \sum_{k \in \mc U} r_k -\dfrac{N \sum\limits_{k \in \mc U}m_k}{\left\lfloor N/|\mc U|\right\rfloor}\right\rbrace\! \! \right\rbrace,
\end{align}
where $\mc M(m_{\text{tot}}, \bm r)=\Big\{\bm m \big\vert \ \! 0 \! \leq \! m_k \! \leq \! r_k, \ \! \sum\limits_{k=1}^{K} m_k = m_{\text{tot}} \Big\} $. 
\end{proposition}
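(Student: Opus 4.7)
The plan is to leverage the cut-set bound of \cite[Theorem 2]{yang2016coded}, which applies for fixed cache sizes $\bm m$, and then minimize over all $\bm m$ consistent with the memory budget. First I would recall that for any fixed $\bm m$ and any subset $\mc U \subset [K]$, the standard cut-set argument considers $\lfloor N/|\mc U|\rfloor$ consecutive demand vectors in which the users in $\mc U$ request pairwise-disjoint files. The caches $\{Z_k\}_{k \in \mc U}$ together with the sequence of delivery signals must enable each user $k \in \mc U$ to reconstruct, in each of those demand vectors, the layered description of its requested file sufficient to meet distortion $D_k$, which has total size $r_k F$ bits. A counting/entropy argument then yields
\[
\lfloor N/|\mc U|\rfloor \cdot R^*(\bm m, \bm r) + N \sum_{k \in \mc U} m_k \;\geq\; \lfloor N/|\mc U|\rfloor \sum_{k \in \mc U} r_k,
\]
which is exactly the inequality of \cite[Theorem 2]{yang2016coded}. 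Rearranging and then maximizing over $\mc U \subset [K]$ gives the tightest fixed-$\bm m$ lower bound.

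Next I would invoke Definition~4, which sets $R^*(m_{\text{tot}}, \bm r) = \inf_{\bm m \in \mc M(m_{\text{tot}},\bm r)} R^*(\bm m, \bm r)$. Taking the infimum of both sides of the fixed-$\bm m$ cut-set inequality over the polytope $\mc M(m_{\text{tot}}, \bm r)$ immediately gives
\[
R^*(m_{\text{tot}}, \bm r) \;\geq\; \min_{\bm m \in \mc M(m_{\text{tot}}, \bm r)} \max_{\mc U \subset [K]} \left\{ \sum_{k \in \mc U} r_k - \frac{N \sum_{k \in \mc U} m_k}{\lfloor N/|\mc U|\rfloor} \right\},
\]
which is the claimed bound. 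The infimum is attained as a minimum since $\mc M(m_{\text{tot}}, \bm r)$ is a compact (bounded) polytope and the objective is the pointwise maximum over the finite collection of subsets $\mc U \subset [K]$ of affine, hence continuous, functions of $\bm m$.

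The main obstacle, such as it is, is essentially bookkeeping: I need to carefully justify that the cut-set derivation of \cite{yang2016coded} produces $\sum_{k \in \mc U} r_k$ on the right-hand side in the heterogeneous-distortion setting, using the fact that layer $l$ of each file has size $(r_l - r_{l-1})F$ bits and that user $k$ requires only layers $[k]$ (total size $r_k F$) in order to achieve distortion $D_k$. The min over $\bm m$ step itself is immediate from taking the infimum of both sides of an inequality and does not require any saddle-point or minimax-interchange argument.
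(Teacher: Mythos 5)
Your proof is correct and follows exactly the route the paper intends: the paper gives no explicit proof of Proposition~\ref{prop_cutset_mtot}, presenting it simply as the extension of the fixed-cache-size cut-set bound of \cite[Theorem 2]{yang2016coded} obtained by minimizing that bound over $\bm m \in \mc M(m_{\text{tot}}, \bm r)$, which is precisely your argument. Your added remarks---that the infimum over $\bm m$ of both sides of a pointwise inequality preserves the bound via Definition~4, and that the minimum is attained because the objective is a finite maximum of affine functions on a compact polytope---are the correct (and only) details needed.
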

For the three-user case, the cut-set bound can be simplified as follows. 
\begin{corollary} Given $K=3$, $N \geq 3$, the target rate vector $\bm r$, and the total memory budget $m_{\text{tot}} \in [0,\sum_{k=1}^{K}r_k]$, the infimum over all achievable delivery loads $R^*(m_{\text{tot}},\bm r)$ is lower bounded by
\begin{align}\label{eqn_cutset_mtot_3Ue}
&\max \bigg\{ \sum_{l=1}^{3} r_l - \dfrac{N}{\lfloor N/3\rfloor} m_{\text{tot}}, \ \! \dfrac{\lfloor N/2\rfloor (r_1 +r_2)+ N r_3 }{N+ \lfloor N/2\rfloor}- \nonumber \\ &\qquad \qquad \qquad \dfrac{N}{N+\lfloor N/2\rfloor} m_{\text{tot}}, \ \! \dfrac{1}{3} \big( \sum_{l=1}^{3} r_l - m_{\text{tot}}\big) \bigg\}.
\end{align}
\end{corollary}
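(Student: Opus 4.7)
The plan is to specialize the general minimax bound of Proposition \ref{prop_cutset_mtot} to $K=3$, where there are seven nonempty subsets $\mc U \subset [3]$, and to exhibit three convex combinations of the corresponding functions
\[
f_{\mc U}(\bm m) \triangleq \sum_{k \in \mc U} r_k - \frac{N \sum_{k \in \mc U} m_k}{\lfloor N/|\mc U|\rfloor}
\]
that, using only the budget equality $\sum_{k=1}^{3} m_k = m_{\text{tot}}$, collapse to expressions depending on $m_{\text{tot}}$ alone. The driving inequality is $\max_{\mc U} f_{\mc U}(\bm m) \geq \sum_{\mc U} \lambda_{\mc U} f_{\mc U}(\bm m)$ for any weights $\lambda_{\mc U} \geq 0$ with $\sum_{\mc U} \lambda_{\mc U} = 1$; once such collapsed combinations equal each of the three terms on the right-hand side of (\ref{eqn_cutset_mtot_3Ue}), the corollary follows by taking the max over $\mc U$ and then the infimum over $\bm m \in \mc M(m_{\text{tot}},\bm r)$.

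The first and third terms of (\ref{eqn_cutset_mtot_3Ue}) come essentially for free. The first, $\sum_l r_l - (N/\lfloor N/3\rfloor) m_{\text{tot}}$, is just $f_{\{1,2,3\}}(\bm m)$, which is already independent of how the budget is split among users. The third, $(\sum_l r_l - m_{\text{tot}})/3$, equals $\frac{1}{3}(f_{\{1\}} + f_{\{2\}} + f_{\{3\}})$, because $f_{\{k\}} = r_k - m_k$ and the singleton terms sum to $\sum_l r_l - m_{\text{tot}}$.

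The substantive step is the middle term, which I would obtain as a convex combination of $f_{\{3\}}$ and $f_{\{1,2\}}$ alone. Writing $\beta \triangleq N/\lfloor N/2\rfloor$ and substituting $m_1 + m_2 = m_{\text{tot}} - m_3$ into $\alpha f_{\{3\}} + (1-\alpha) f_{\{1,2\}}$, the coefficient of $m_3$ works out to $-\alpha + (1-\alpha)\beta$; requiring it to vanish uniquely pins the weight at $\alpha = \beta/(1+\beta) = N/(N+\lfloor N/2\rfloor)$. A short algebraic check then shows the resulting value equals $[\lfloor N/2\rfloor(r_1+r_2) + N r_3 - N m_{\text{tot}}]/(N+\lfloor N/2\rfloor)$, precisely the middle term of (\ref{eqn_cutset_mtot_3Ue}).

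Combining the three pointwise inequalities yields $\max_{\mc U} f_{\mc U}(\bm m) \geq \max\{A_1,A_2,A_3\}$ for every admissible $\bm m$ (writing $A_1,A_2,A_3$ for the three terms of (\ref{eqn_cutset_mtot_3Ue})), and infimizing over $\bm m$ gives the corollary. The main (and essentially only) subtlety is identifying the weights in the second combination; the box constraints $0 \leq m_k \leq r_k$ play no role, because each of the three inequalities is valid for every $\bm m$ with $\sum_k m_k = m_{\text{tot}}$, so neither case analysis nor an interiority argument is needed.
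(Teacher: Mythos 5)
Your proof is correct, and the paper itself gives no proof of this corollary (it is stated as an unproved "simplification" of Proposition~\ref{prop_cutset_mtot}); your convex-combination argument --- taking $f_{\{1,2,3\}}$ directly, averaging the three singletons, and weighting $f_{\{3\}}$ against $f_{\{1,2\}}$ with $\alpha = N/(N+\lfloor N/2\rfloor)$ so the $m_3$-dependence cancels --- is the natural derivation and mirrors the uniform-dual-multiplier technique the paper uses in the proof of Lemma~\ref{lemma_equal_mem}. Your pairing of $\{3\}$ with $\{1,2\}$ (rather than, say, $\{1\}$ with $\{2,3\}$) is also the right one, since it places the largest rate $r_3$ against the larger coefficient $N$ and thus yields the strongest bound of that family, matching the stated middle term exactly.
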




\section{Discussion and Numerical Results}
In this section, we first consider systems with total cache memory budget and present numerical examples that illustrate the optimal solution of (\ref{eqn_opt_budget}). Then, we consider systems with fixed cache sizes and illustrate the gain achieved by our caching scheme compared to the schemes proposed in \cite{yang2016coded}.

\subsection{Systems with Total Cache Memory Budget}

In Fig. \ref{fig:R_vs_mtot_K3}, we compare the achievable worst-case delivery load $R^*_{\mathfrak{A}}(m_{\text{tot}}, \bm r)$ defined in Theorem \ref{thm_mtot} with the lower bound on $R^*(m_{\text{tot}}, \bm r)$ defined in Proposition \ref{prop_cutset_mtot}, for $K=N=3$ and the target rates $\bm r =[0.5, \ \! 0.7, \ \! 1]$. In particular, for $K=3$ the corner points of the achievable delivery load are defined in (\ref{eqn_R_achv_K3}) and the cut-set bound in (\ref{eqn_cutset_mtot_3Ue}) for $N=3$ is 
\begin{align}
\! \! \! \! \! \!   \max \! \! \left\{ \! \sum_{l=1}^{3} \! \! r_l \!- \!3 m_{\text{tot}}, \dfrac{r_1 \! + \! r_2 \! + \! 3 r_3 \!- \! 3 m_{\text{tot}}}{4}, \frac{1}{3} \big(\! \sum_{l=1}^{3} \! \! r_l \! - \! m_{\text{tot}} \big) \! \! \right\} \! \! .\! \!
\end{align}
\begin{remark}
We observe that for large memory budget, $ \sum_{l=2}^{K} r_l \leq m_{\text{tot}} \leq \sum_{l=1}^{K} r_l$, the cut-set bound can be achieved and the delivery load $R^*(m_{\text{tot}}, \bm r)= \frac{1}{K} \big( \sum_{l=1}^{K}  \ r_l \! - \! m_{\text{tot}} \big)$.
\end{remark}
\begin{figure}[t]
\includegraphics[scale=0.5]{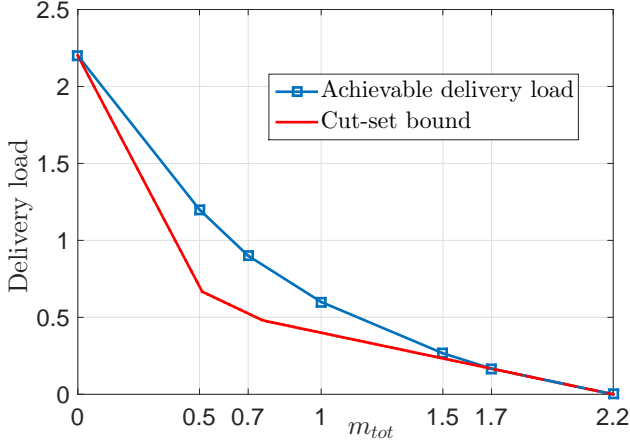}
\centering
\caption{Comparing the delivery load $R^*_{\mathfrak{A}}(m_{\text{tot}}, \bm r)$ in Theorem \ref{thm_mtot} with the cut-set bound in Proposition \ref{prop_cutset_mtot}, for $K=N=3$, and $\bm r =[0.5, \ \! 0.7, \ \! 1]$.}\label{fig:R_vs_mtot_K3}
\end{figure}

Recall the notation $f_l=(r_l-r_{l-1})$ and $r_0=0$. Fig. \ref{fig:m_alloc_K3} shows the optimal allocation for the users' cache sizes that correspond to the achievable delivery load in Fig. \ref{fig:R_vs_mtot_K3}. From Fig. \ref{fig:m_alloc_K3}, we observe the optimal memory allocation follows a threshold policy. In particular, we have 
\begin{itemize}
\item If $m_{\text{tot}} \! = \! \alpha f_1$ and $\alpha \! \in \! [0,1]$, then $\bm m \!= \![\alpha f_1/3, \ \! \alpha f_1/3, \ \! \alpha f_1/3]$.
\item If $m_{\text{tot}}=f_1 \! + \! \alpha f_2$ and $\alpha \in [0,1]$, then $\bm m = [f_1/3, \ \! f_1/3\!+\!\alpha f_2/2, \ \! f_1/3\!+\!\alpha f_2/2]$.
\item If $m_{\text{tot}}=f_1 \! + \! f_2 \! + \! \alpha f_3$ and $\alpha \in [0,1]$, then $\bm m = [f_1/3, \ \! f_1/3\!+\!f_2/2, \ \! f_1/3\!+\!f_2/2 \! + \! \alpha f_3]$.
\item If $m_{\text{tot}}=(1\!+\!\alpha)f_1 \! + \! f_2 \! + \! f_3 $ and $\alpha \in [0,1]$, then $\bm m = [(1\!+\!\alpha)f_1/3, \ \! (1\!+\!\alpha)f_1/3\!+\!f_2/2, \ \! (1\!+\!\alpha)f_1/3\!+\!f_2/2 \! + \! f_3]$.
\item If $m_{\text{tot}}=2f_1 \! + \! (1\!+\!\alpha)f_2 \! + \! f_3 $ and $\alpha \in [0,1]$, then $\bm m = [2f_1/3, \ \! 2f_1/3\!+\!(1\!+\!\alpha)f_2/2, \ \! 2f_1/3\!+\!(1\!+\!\alpha)f_2/2 \! + \! f_3]$.
\item If $m_{\text{tot}}=(2\!+\!\alpha)f_1 \! + \! 2f_2 \! + \! f_3 $ and $\alpha \in [0,1]$, then $\bm m = [(2\!+\!\alpha)f_1/3, \ \! (2\!+\!\alpha)f_1/3\!+\!f_2, \ \! (2\!+\!\alpha)f_1/3\!+\!f_2 \! + \! f_3]$.
\end{itemize}

\begin{figure}[t]
\includegraphics[scale=0.5]{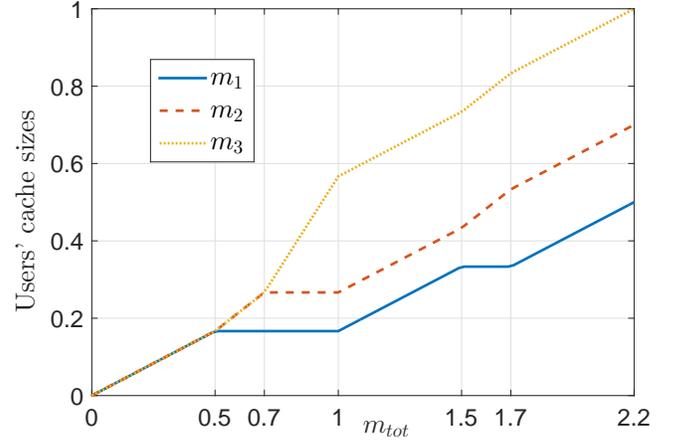}
\centering
\caption{The memory allocation corresponding to the delivery load in Theorem \ref{thm_mtot} for $\bm r =[0.5, \ \! 0.7, \ \! 1]$.}\label{fig:m_alloc_K3}
\end{figure}

\subsection{Systems with Fixed Cache Sizes}

For systems with fixed cache sizes, we compare the delivery load achieved by our scheme which performs joint design of the memory allocation over the layers and the caching scheme, with the two separation based schemes in \cite{yang2016coded}. In reference \cite{yang2016coded}, the memory allocation over the layers follows one of the following heuristic methods:
\begin{enumerate}
\item \textit{Proportional cache allocation} (PCA), where user $k$ cache size for layer $l$ is defined as $m_k^{(l)}=f_l m_k / r_k$. 
\item \textit{Ordered cache allocation} (OCA), where user $k$ allocates all of its cache to the first $b$ layers if $  r_{b-1}< m_k < r_b$. 
\end{enumerate}
In turn, in layer $l$, we have a caching system that consists of $K-l+1$ users with fixed cache sizes $\bm m^{(l)} $. Reference \cite{yang2016coded} has proposed designing the caching scheme in each layer separately, i.e., the multicast signals utilize the side-information from one layer. In particular, in layer $l$, the cache sizes $\bm m^{(l)} $ are further divided into sublayers of equal size and the MaddahAli-Niesen scheme is applied on each sublayer. Additionally, reference \cite{yang2016coded} has formulated an optimization problem in order to identify the optimal distribution of $f_l$ over the sublayers. We refer to this scheme as the \textit{layered scheme}.


Fig. \ref{fig:comp_fix_mem} shows the delivery load obtained from (\ref{eqn_opt_fixed_Mem}), the delivery load achieved by PCA/OCA combined with the layered scheme, and the cut-set bound in \cite[Theorem 2]{yang2016coded}, for $K=N=3$, $\bm r =[0.5, \ \! 0.8, \ \! 1]$, and $ m_k=0.8 \ \! m_{k+1}$. We observe that the delivery load achieved by (\ref{eqn_opt_fixed_Mem}) is lower than the one achieved by the schemes in \cite{yang2016coded}. This is attributed to the fact that we not only jointly optimize the memory allocation over the layers and the caching scheme, but also exploit the multicast opportunities over all layers. However, the performance improvement comes at the expense of higher complexity, since the dimension of the optimization problem in (\ref{eqn_opt_fixed_Mem}) grows exponentially with the number of users.

The next example illustrates the suboptimality of exploiting the multicast opportunities in each layer separately.

\begin{example}
For $\bm m=[0.1, \ \! 0.2, \ \! 0.6]$ and $ \bm r =[0.2, \ \! 0.3, \ \! 0.8]$, the optimal solution of (\ref{eqn_opt_fixed_Mem}) is as follows \newline
\textbf{Placement phase:} 
\begin{itemize}
\item Layer $1$: $W^{(1)}_{j}$ is divided into subfiles $\tilde W^{(1)}_{j,\{3\}} $ and $\tilde W^{(1)}_{j,\{1,2\}}$, such that $a^{(1)}_{\{3\}}=a^{(1)}_{\{1,2\}}=0.1 $.
\item Layer $2$: $W^{(2)}_{j}$ is stored at user $2$, i.e., $a^{(2)}_{\{2\}}=0.1$.
\item Layer $3$: $W^{(3)}_{j}$ is stored at user $3$, i.e., $a^{(3)}_{\{3\}}=0.5$.
\end{itemize}
\textbf{Delivery phase:} We have two multicast transmissions $X_{\{1,3\}, \bm d}$ and $X_{\{2,3\}, \bm d}$, such that
\begin{itemize}
\item $X_{\{1,3\}, \bm d}= W^{(1),\{1,3\}}_{d_1,\{3\}} \oplus W^{(1),\{1,3\}}_{d_3,\{1,2\}}$, where $v_{\{1,3\}}=a^{(1)}_{\{3\}}=a^{(1)}_{\{1,2\}}=0.1$.
\item $X_{\{2,3\}, \bm d}= W^{(1),\{2,3\}}_{d_2,\{3\}} \oplus W^{(2),\{2,3\}}_{d_2,\{2\}}$, where $v_{\{2,3\}}=a^{(1)}_{\{3\}}=a^{(2)}_{\{2\}}=0.1$.
\end{itemize}
The optimal solution of (\ref{eqn_opt_fixed_Mem}) achieves the delivery load $R^*_{\mathfrak{A},\mathfrak{D}}(\bm m,\bm r) =0.2$ compared to $0.2167$ which is achieved by exploiting only the intra-layer multicast opportunities. 
\end{example}

\begin{figure}[t]
\includegraphics[scale=0.5]{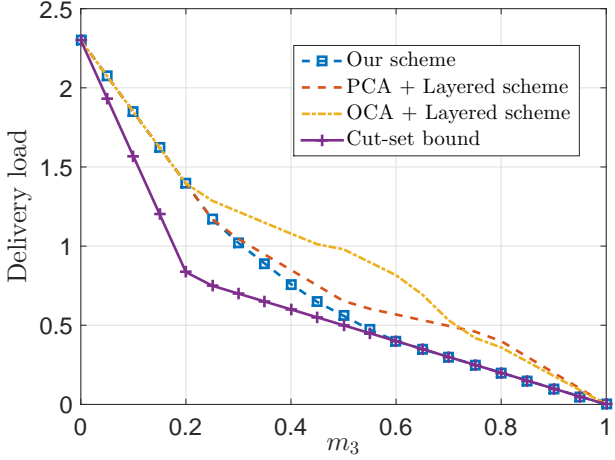}
\centering
\caption{Comparing the delivery load $R^*_{\mathfrak{A},\mathfrak{D}}(\bm m, \bm r)$ obtained from (\ref{eqn_opt_fixed_Mem}), the delivery load achieved by the two schemes in \cite{yang2016coded}, and the cut-set bound \cite[Theorem 2]{yang2016coded}, for $K=N=3$, $\bm r =[0.5, \ \! 0.8, \ \! 1]$, and $ m_k=0.8 \ \! m_{k+1}$.}\label{fig:comp_fix_mem}
\end{figure}

\section{Conclusion}\label{sec_con}
In this paper, we have studied coded caching systems with heterogeneous distortion requirements. In addition to designing the caching scheme, the server allocates the sizes of the cache memories at the end users, subject to a total cache budget constraint over the set of all users. Assuming uncoded placement and linear delivery policies, we have shown that the problem of minimizing the worst-case delivery load can be formulated as a linear program.

The optimal memory allocation has been shown to follow a threshold policy. Furthermore, we have observed that our solution matches the cut-set bound in the large total cache memory budget region. As a byproduct, we have proposed a novel caching scheme which, for fixed cache sizes, outperforms the state-of-art schemes \cite{yang2016coded} by exploiting the inter-layer multicast opportunities and jointly designing the cache contents and the partitioning of the caches over the layers.

%
%
%



\appendices
\section{Proof of Theorem \ref{thm_mtot} }\label{app_thm_mtot}
In order to prove that the optimal solution of (\ref{eqn_opt_budget}) achieves the delivery load in Theorem \ref{thm_mtot}, we first show the optimality of uniform cache allocation in each layer, i.e., $m_k^{(l)}=m^{(l)}, \forall k \in\{l,\dots,K\}$. Then, we show that the optimal memory allocation over the layers follows the threshold policy defined by (\ref{eqn_def_thr_mtot1})-(\ref{eqn_def_thr_mtot3}). 

Next lemma shows the optimality of allocating equal cache sizes in a caching system where the users request the same number of bits from the desired files \cite{maddah2014fundamental}, which is equivalent to equal distortion requirements at the users \cite{yang2016coded}.
\begin{lemma}\label{lemma_equal_mem}
For a caching system with $K$ users, $N \geq K$ files, and cache memory budget $m_{\text{tot}} \in [0,K]$, the minimum worst-case delivery load under uncoded placement
\begin{align}
R^*_{\mathfrak{A}}(m_{\text{tot}})= \max_{j \in [K]} \left\lbrace \dfrac{2K-j+1}{j+1}-\dfrac{(K+1)m_{\text{tot}}}{j(j+1)}\right\rbrace,
\end{align}
which is achieved by uniform memory allocation and the MaddahAli-Niesen caching scheme \cite{maddah2014fundamental}.
\end{lemma}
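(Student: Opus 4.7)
\medskip
\noindent\textbf{Proof Proposal for Lemma \ref{lemma_equal_mem}.}
The plan is to split the statement into three claims and handle them in order: (i) the uniform allocation $m_k = m_{\text{tot}}/K$ together with the Maddah-Ali--Niesen (MN) scheme achieves the stated delivery load; (ii) any uniform allocation cannot be improved beyond this by any uncoded placement scheme; and (iii) among all feasible allocations $\bm m$ with $\sum_k m_k = m_{\text{tot}}$, the uniform one is optimal. Parts (i) and (ii) reduce the problem to the classical symmetric caching setting, while part (iii) is the content specific to the memory-budget formulation.

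For (i), I would fix $\bm m = (m_{\text{tot}}/K)\bm 1$ and apply the MN placement/delivery at the integer corner points $m_{\text{tot}} \in \{0,1,\dots,K\}$, which yields delivery load $(K-t)/(1+t)$ at $m_{\text{tot}}=t$. For arbitrary $m_{\text{tot}}\in[0,K]$, memory-sharing between two adjacent corner points produces the lower convex envelope of these $K+1$ points. A short direct computation shows that for each $j\in[K]$ the line $\tfrac{2K-j+1}{j+1}-\tfrac{(K+1)m_{\text{tot}}}{j(j+1)}$ passes through the two consecutive corner points $(j-1,(K-j+1)/j)$ and $(j,(K-j)/(j+1))$; taking the pointwise maximum over $j$ thus reproduces exactly the piecewise-linear convex envelope, matching the right-hand side of the lemma. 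For (ii), I would cite the converse for uncoded placement in the symmetric setting established in \cite{wan2016optimality,yu2016exact}, which shows that no scheme in $\mathfrak A$ with equal cache sizes $m_k=m_{\text{tot}}/K$ can beat the MN rate.

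The main remaining step is (iii): showing that shifting mass among the $m_k$'s cannot help. The idea is convexity-plus-symmetry. For fixed cache sizes $\bm m$, the quantity $R^*_{\mathfrak{A}}(\bm m)$ is the optimal value of the linear program obtained from \textit{\textbf{O1}} (restricted to the layer-1-only version, since the source is a single layer here), so it is a convex function of the right-hand-side data $\bm m$; equivalently, convexity follows from a direct time-sharing argument splitting each file into two parts and running two different schemes. Moreover, $R^*_{\mathfrak{A}}(\bm m)$ is invariant under permutations of the users because both the worst-case demand and the uncoded placement class are permutation symmetric. Combining symmetry with convexity (Jensen's inequality applied to the orbit average $\tfrac{1}{K!}\sum_\pi R^*_{\mathfrak A}(\pi\bm m)$) implies
\begin{equation*}
R^*_{\mathfrak A}(\bm m)\ \geq\ R^*_{\mathfrak A}\!\left(\tfrac{m_{\text{tot}}}{K}\bm 1\right)
\end{equation*}
for every $\bm m$ with $\sum_k m_k = m_{\text{tot}}$, so the uniform allocation is a minimizer and can be combined with (i)--(ii) to conclude.

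The step I expect to be the main obstacle is the rigorous justification of convexity together with the permutation symmetry of the worst-case delivery load. Permutation symmetry is immediate for the worst-case over $\bm d\in[N]^K$, but it is worth being explicit that relabeling users also relabels the constraints \eqref{eqn_dlv_cnst1}--\eqref{eqn_dlv_cnst4} consistently, so the LP value is genuinely invariant. Convexity in $\bm m$ follows either from the standard sensitivity lemma for LPs (the optimal value of an LP is convex in the right-hand side of $\leq$ constraints whose coefficients are fixed) or, more concretely, by a file-splitting time-sharing construction; once these two properties are in place, the symmetrization reduces the heterogeneous problem to the homogeneous one handled by (i) and (ii).
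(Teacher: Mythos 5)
Your proposal is correct, but the converse is organized quite differently from the paper's. The paper proves the lower bound in one shot: it starts from the lower bound for uncoded placement with \emph{heterogeneous} cache sizes in \cite[Theorem 1]{ibrahim2018coded}, writes its LP dual with multipliers $\lambda_0,\lambda_1,\dots,\lambda_K$, and then restricts to symmetric multipliers $\lambda_k=\lambda$ so that the resulting bound depends on $\bm m$ only through $\sum_k m_k$; evaluating the inner $\max\min$ yields exactly $\max_{j\in[K]}\big\{\tfrac{2K-j+1}{j+1}-\tfrac{(K+1)m_{\text{tot}}}{j(j+1)}\big\}$ for \emph{every} allocation with total $m_{\text{tot}}$, which handles your steps (ii) and (iii) simultaneously. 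You instead factor the converse into two reusable facts: permutation symmetry plus convexity of $\bm m\mapsto R^*_{\mathfrak A}(\bm m)$ (Jensen over the orbit average) reduces the heterogeneous problem to the uniform allocation, and the known exact converse for uncoded placement in the symmetric setting \cite{wan2016optimality,yu2016exact} finishes it. Your route is more modular and avoids invoking the heterogeneous-cache bound of \cite{ibrahim2018coded}, at the price of importing the symmetric-case converse as a black box; the paper's route gives the closed-form expression directly from the dual and is the one that generalizes layer-by-layer in the remainder of Appendix A. One small caution: justify convexity by the file-splitting/time-sharing argument rather than by LP sensitivity applied to \textit{\textbf{O1}}, since the LP in \eqref{eqn_dlv_cnst1}--\eqref{eqn_dlv_cnst4} characterizes $R^*_{\mathfrak A,\mathfrak D}$ (linear delivery) whereas the lemma concerns $R^*_{\mathfrak A}$, the infimum over all delivery policies; the time-sharing argument you already sketch covers the general case. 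Your verification that the lines $\tfrac{2K-j+1}{j+1}-\tfrac{(K+1)m_{\text{tot}}}{j(j+1)}$ are the successive chords of the convex sequence $(t,(K-t)/(1+t))$ is correct and is all that is needed to match the achievability side to the stated formula.
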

\begin{proof}
For any memory allocation $\bm m=[m_1,\dots,m_K]$, 
\begin{subequations}
\begin{align}
\! \! \! \! \! R^*_{\mathfrak{A}}(\bm m) \geq 	\!  \!    & \max_{\lambda_{0} \in \mathbb{R},\lambda_{k} \geq 0}  
	& & -\lambda_{0} - \sum_{k=1}^{K} m_k \lambda_{k} \\
	& \! \! \! \! \! \!  \text{subject to}
	& &  \! \! \! \! \! \! \! \! \! \! \! \lambda_0 \!  + \!  \sum_{k \in \mc S} \!  \lambda_k \!  + \!  \dfrac{K\! -\! |\mc S|}{|\mc S|\! +\! 1} \geq 0, \ \! \forall \mc S \!  \subset \! [K], 
\end{align}
\end{subequations}
which is obtained by considering the average cut in the lower bound under uncoded placement in \cite[Theorem 1]{ibrahim2018coded}. In turn, by considering $\lambda_k= \lambda, \ \! \forall k \in [K]$, we get
\begin{align}
R^*_{\mathfrak{A}}(m_{\text{tot}}) \! &\geq \max_{\lambda \geq 0} \min_{j \in \{0,\dots,K\}} \! \! \left\lbrace  \dfrac{K-j}{j+1} - \lambda(j  -  m_{\text{tot}}) \right\rbrace, \\
&=\max_{j \in [K]} \left\lbrace \dfrac{2K-j+1}{j+1}-\dfrac{(K+1)m_{\text{tot}}}{j(j+1)}\right\rbrace,
\end{align}
where $m_{\text{tot}}= \sum_{k=1}^{K} m_k$.
\end{proof}
Building on Lemma \ref{lemma_equal_mem}, given $\bm r$ and any memory allocation $\bm m^{(1)}, \dots,\bm m^{(K)}$, the minimum worst-case delivery load under uncoded placement, $R^*_{\mathfrak{A}}(\bm m^{(1)}, \dots, \bm m^{(K)},\bm r)$ satisfies
\begin{align}
&R^*_{\mathfrak{A}}( \bm m^{(1)}, \dots, \bm m^{(K)},\bm r) \geq \sum_{l=1}^{K} (r_l -r_{l-1}) \nonumber \\ &\max_{j \in [K-l+1]} \!\!\bigg\lbrace \! \dfrac{2(K\!-\!l\!+\!1)\!-\!j\!+\!1}{j+1} \! - \! \dfrac{(K\!-\!l\!+\!2) \! \sum\limits_{k=l}^{K} \! m^{(l)}_k}{j(j+1)} \!\bigg\rbrace .
\end{align}
Furthermore, this lower bound is achievable if we consider $m_k^{(l)}=m^{(l)}, \forall k \in\{l,\dots,K\}$ and apply the MaddahAli-Niesen caching scheme \cite{maddah2014fundamental} on each layer. In turn, the optimization problem in (\ref{eqn_opt_budget}) simplifies to the problem of allocating the memory over the layers, which is defined as
\begin{subequations} \label{eqn_opt_budget_simpl}
\begin{align}
&  \min_{ t_1,\dots,t_K}  
& & \sum_{l=1}^{K} \chi_l f_l\\
& \text{subject to}
& & \sum_{l=1}^{K} t_l f_l \leq m_{\text{tot}}, \\
& & & 0\leq t_l \leq K-l+1,
\end{align}
\end{subequations}
where $m^{(l)}= t_l f_l/(K-l+1)$ and
\begin{align}
\chi_l \!\! \triangleq \max_{j \in [K-l+1]} \left\lbrace \dfrac{2(K\!-\!l\!+\!1)\!-\!j\!+\!1}{j+1}-\dfrac{(K\!-\!l\!+\!2)t_l}{j(j+1)}\right\rbrace.
\end{align}
Finally, we can show that the optimal solution to (\ref{eqn_opt_budget_simpl}) achieves the delivery load in Theorem \ref{thm_mtot}, by solving the dual of the linear program in (\ref{eqn_opt_budget_simpl}).

\bibliographystyle{IEEEtran}
\bibliography{IEEEabrv,references}
\end{document}